\newcommand\Z{\mathbb{Z}}
\newcommand\N{\mathbb{N}}
\newcommand\Tilings{{\mathcal T}_\tau}
\newcommand\ForbPat{{\mathcal F}_\tau}
\newcommand\AllowPat{{\mathcal P}_\tau}
\newcommand\rk{\rho}
\newcommand\ie{i.e.,\ }
\newcommand\eg{e.g.\ }
\newcommand\typea{{\textit{type~a}}\xspace}
\newcommand\typeb{{\textit{type~b}}\xspace}
\newcommand\OP{{\mathcal O}_P}
\newcommand\class[1]{\ensuremath{\left<#1\right>}}
\begin{document}

\title[Structural aspects of tilings]{Structural aspects of tilings}
\author[]{A. Ballier}{Alexis Ballier}
\author[]{B. Durand}{Bruno Durand}
\author[]{E. Jeandel}{Emmanuel Jeandel}

\address[]{Laboratoire d'informatique fondamentale de Marseille (LIF)
Aix-Marseille Universit\'e, CNRS\newline                                                
39 rue Joliot-Curie, 13\,453 Marseille Cedex 13, France}                    
\email{alexis.ballier@lif.univ-mrs.fr}
\email{bruno.durand@lif.univ-mrs.fr}
\email{emmanuel.jeandel@lif.univ-mrs.fr}

\keywords{tiling, domino, patterns, tiling preorder, tiling structure}
\subjclass{G.2.m}

\begin{abstract}
    In this paper, we study the structure of
    the set of tilings produced by any given tile-set. 
    For better understanding this structure, we address the set of finite
    patterns that each tiling contains.

    This set of patterns can be analyzed in two different contexts: the first one is
    combinatorial and the other topological. These two approaches have
    independent merits and, once combined, provide somehow surprising
    results.

    The particular case where the set of produced tilings is countable is
    deeply investigated while we prove that the uncountable case may have a
    completely different structure.

    We introduce a pattern preorder and also make use of Cantor-Bendixson rank.
    Our first main result is that a tile-set that produces only periodic tilings
    produces only a finite number of them. Our second main result exhibits a tiling with
    exactly one vector of periodicity in the countable case.
\end{abstract}

\maketitle

\stacsheading{2008}{61-72}{Bordeaux}
\firstpageno{61}

\section{Introduction}

Tilings are basic models for geometric phenomena of computation: local
constraints they formalize have been of broad interest in the
community since they capture geometric aspects of computation
\cite{robinson,Ber-undecidability-dp,DBLP:journals/jsyml/Hanf74,
DBLP:journals/jsyml/Myers74,DBLP:conf/stoc/DurandLS01}.  This
phenomenon was discovered in the sixties when tiling problems happened
to be crucial in logic: more specifically, interest shown in tilings
drastically increased when Berger proved the undecidability of the
so-called domino problem~\cite{Ber-undecidability-dp} (see also
\cite{gurkor72} and the well known book
\cite{DBLP:books/sp/BorgerGG1997} for logical aspects).  Later,
tilings were basic tools for complexity theory (see the nice review of
Peter van Emde Boas \cite{forever} and some of Leonid Levin's paper
such as \cite{DBLP:journals/siamcomp/Levin86}).

Because of growing interest for this very simple model, several
research tracks were aimed directly on tilings: some people tried to
generate the most complex tilings with the most simple constraints
(see
\cite{robinson,DBLP:journals/jsyml/Hanf74,DBLP:journals/jsyml/Myers74,DBLP:conf/stoc/DurandLS01}),
 while others were most interested in structural aspects (see
\cite{radin92space,DBLP:journals/tcs/Durand99}).

In this paper we are interested in structural properties of tilings.
We choose to focus on finite patterns tilings contain and thus introduce a
natural preorder on tilings: a tiling is extracted
from another one if all finite patterns that appear in the first one
also appear in the latter. We develop this combinatorial notion in
Section \ref{sec:def:tilings}. This approach can be expressed in terms
of topology (subshifts of finite type) and we shall explain the relations
between both these approaches in Section \ref{sec:def:topo}.

It is important to stress that both these combinatorial and
topological approaches have independent merits.  Among the results we
present, different approaches are indeed used for proofs.  More specifically,
our first main result (Theorem~\ref{thm:peralorsperfini}) states that if a
tile-set produces only periodic tilings then it produces only finitely many of
them;
despite its apparent simplicity, we did not find any proof of Theorem~\ref{thm:peralorsperfini}
in the literature. 
Our other main result (Theorem~\ref{thm:dnbalors1perstrict})
which states that in the countable case a tiling with exactly one vector of
periodicity exists is proved
with a strong help of topology.

Our paper is organized as follows: Section~\ref{sec:def} is devoted to
definitions (combinatorics, topology) and basic structural remarks.
In Section~\ref{sec:main} we prove the existence of minimal and maximal 
elements in tilings enforced by a tile-set.  Then we present an analysis in terms of
Cantor-Bendixson derivative which provides powerful tools.  We study
the particular case where tilings are countable and present our main
results. We conclude by some open problems.

\section{Definitions}
\label{sec:def}
\subsection{Tilings}\label{sec:def:tilings}
We present notations and definitions for tilings since several models
are used in literature: Wang tiles, geometric frames of rational
coordinates, local constraints\ldots All these models are equivalent
for our purposes since we consider very generic properties of them
(see \cite{DBLP:journals/tcs/CervelleD04} for more details and proofs).
We focus our study on tilings of the plane although our results hold in higher
dimensions.

In our definition of tilings, we first associate a state to each
cell of the plane. Then we impose a local constraint on them.  More
formally, $Q$ is a finite set, called the \emph{set of states}.  A
\emph{configuration} $c$ consists of cells of the plane
with states, thus $c$ is an element of $Q^{\Z^2}$.  We denote by
$c_{i,j}$ or $c(i,j)$  the state of $c$ at the cell $(i,j)$.

A tiling is a configuration which satisfies a given finite set of
finite constraints everywhere.  More specifically we express these constraints
as a set of allowed patterns: a configuration is a tiling if around any of its
cells we can see one of the allowed patterns:

\begin{definition}[patterns]
A \emph{pattern} $P$ is a finite restriction of a configuration \ie 
an element of $Q^V$ for some finite domain $V$ of $\Z^2$. A pattern 
\emph{appears} in a configuration $c$ (resp.\ in some other pattern 
$P'$) if it can be found somewhere in $c$ (resp.\ in $P'$); \ie
if there exists a vector $t \in \Z^2$ such that $c(x+t) = P(x)$ on 
the domain of $P$
(resp.\ if $P'(x+t)$ is defined for $x\in V$ and $P'(x+t) = P(x)$) .
\end{definition}

By language extension we say that  a
pattern is \emph{absent} or \emph{omitted} in a configuration if it does not 
appear in it.

\begin{definition}[tile-sets and tilings]
A \emph{tile-set} is a tuple $\tau=(Q, \AllowPat)$ where $\AllowPat$ is a finite
set of patterns on $Q$.
All the elements of $\AllowPat$ are supposed to be defined on the same domain 
denoted by $V$ ($\AllowPat \subseteq Q^V$).

A \emph{tiling} by $\tau$ is a configuration $c$ equal to one of 
the patterns on all cells:
$$\forall x \in \Z^2, c|_{V+x} \in \AllowPat$$

We  denote by $\Tilings$ the set of tilings by $\tau$. 
\end{definition}

Notice that in the definition of one tile-set we can allow patterns of
different definition domains provided that there are a finite number of them.



An example of a tile-set defined by its allowed patterns is given in Fig.~\ref{fig:tuiles}.
The produced tilings are given in Fig.~\ref{fig:treillis}; the meaning of
  the edges in the graph will be explained later; tilings are represented modulo shift.
In $A_i$ and $B_i$, $i$ is an integer that represents the size
of the white stripe.

\begin{figure}[ht]
\centering
\begin{tikzpicture}[scale=.5]
\filldraw[fill=red] (0,7) rectangle +(1,1);
\filldraw[fill=red] (1,7) rectangle +(1,1);
\filldraw[fill=red] (3,7) rectangle +(1,1);
\filldraw[fill=white] (4,7) rectangle +(1,1);
\filldraw[fill=red] (6,7) rectangle +(1,1);
\filldraw[fill=green] (7,7) rectangle +(1,1);
\filldraw[fill=red] (9,7) rectangle +(1,1);
\filldraw[fill=black] (10,7) rectangle +(1,1);
\filldraw[fill=white] (1,5) rectangle +(1,1);
\filldraw[fill=white] (2,5) rectangle +(1,1);
\filldraw[fill=green] (4,5) rectangle +(1,1);
\filldraw[fill=green] (5,5) rectangle +(1,1);
\filldraw[fill=black] (7,5) rectangle +(1,1);
\filldraw[fill=black] (8,5) rectangle +(1,1);
\filldraw[fill=red] (0,3) rectangle +(1,1);
\filldraw[fill=green] (2,3) rectangle +(1,1);
\filldraw[fill=green] (4,3) rectangle +(1,1);
\filldraw[fill=white] (6,3) rectangle +(1,1);
\filldraw[fill=white] (8,3) rectangle +(1,1);
\filldraw[fill=black] (10,3) rectangle +(1,1);
\filldraw[fill=red] (0,2) rectangle +(1,1);
\filldraw[fill=green] (2,2) rectangle +(1,1);
\filldraw[fill=white] (4,2) rectangle +(1,1);
\filldraw[fill=white] (6,2) rectangle +(1,1);
\filldraw[fill=black] (8,2) rectangle +(1,1);
\filldraw[fill=black] (10,2) rectangle +(1,1);
\end{tikzpicture}
\caption{Allowed patterns}
\label{fig:tuiles}
\end{figure}
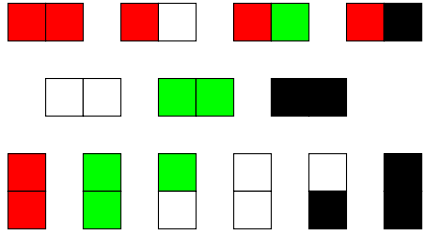

\begin{figure}[ht]
\centering
\includegraphics{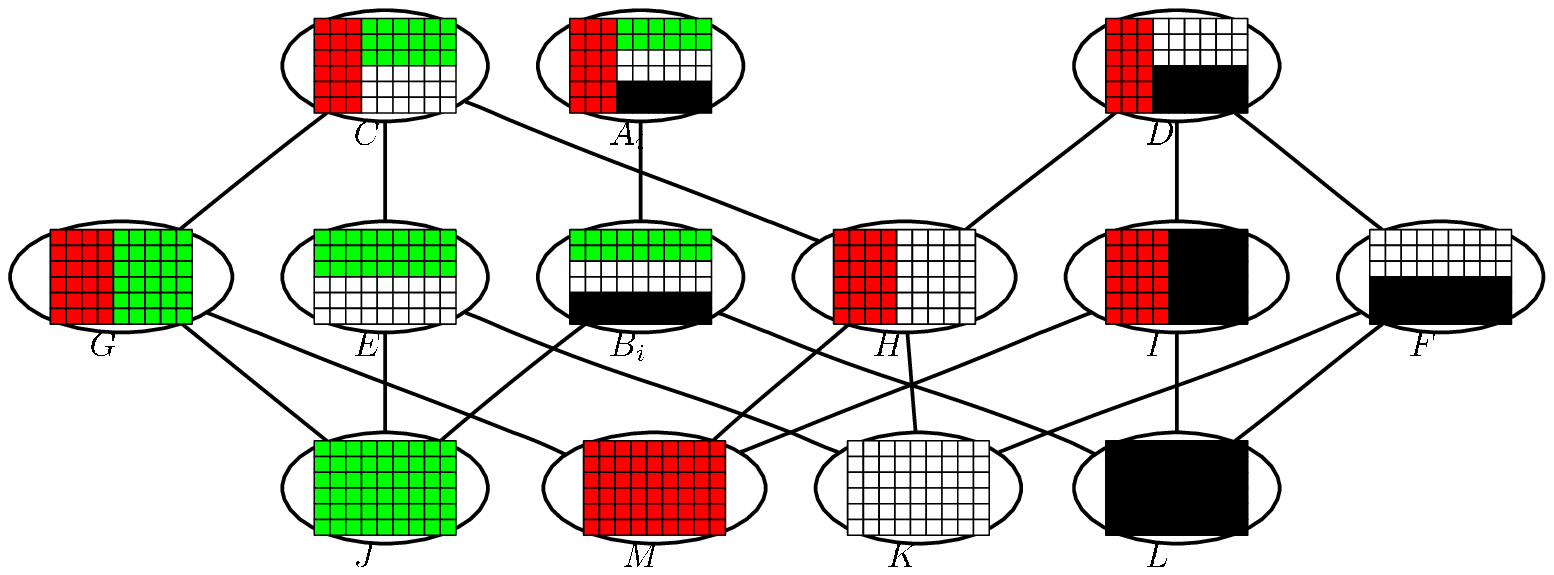}

{\scriptsize An edge represents a relation
$Q \prec P$ if $P$ is above $Q$. Transitivity edges are not depicted. 
As an example $K \prec E$ and $K \prec C$.}
\caption{Hasse diagram of the order $\prec$ with the tile-set defined in
Fig.~\ref{fig:tuiles}}
\label{fig:treillis}
\end{figure}

Throughout the following, it will be more convenient for us to define tile-sets by
the set of their \emph{forbidden patterns}:
a tile-set is then given 
by a finite set $\ForbPat$ of forbidden patterns $(\ForbPat = Q^V 
\setminus
\AllowPat)$; a configuration is a
tiling if no forbidden pattern appears.

Let us now introduce the following natural preorder, which will play a central
role in our paper:

\begin{definition}[Preorder]
\label{def:preorder}
Let $x,y$ be two tilings, we say that $x\preceq y$ if any pattern that appears
in $x$ also appears in $y$.
\end{definition}

We say that two tilings $x,y$ are equivalent if $x\preceq y$ and
$y\preceq x$.  We denote this relation by $x\approx y$.  In this case,
$x$ and $y$ contain the same patterns.  The equivalence class of $x$
is denoted by $\class{x}$.  We  write $x \prec y$ if $x \preceq
y$ and $x \not\approx y$.

Some structural properties of tilings can be seen with the help of
this preorder.  The Hasse diagram in Fig.~\ref{fig:treillis} correspond to the
relation $\prec$.



We choose to distinguish two types of tilings: A tiling $x$ is of
\typea if any pattern that appears in $x$ appears infinitely many
times; $x$ is of \typeb if there exists a pattern that appears only
once in $x$.  Note that any tiling is either of \typea or of \typeb:
suppose that there is a pattern that appears only a finite number of
times in $x$; then the pattern which is the union of those
patterns appears only once.

If $x$ is of \typeb, then the only tilings equivalent to $x$ are its shifted:
there is a unique way in $\class{x}$ to grow around the unique pattern.

\subsection{Topology}
\label{sec:def:topo}



In the domain of symbolic dynamics, topology provides both interesting
results and is also a nice condensed way to express some combinatorial proofs
\cite{hedlund,gohe}.  The benefit of topology is a
little more surprising for tilings since they are essentially static
objects.  Nevertheless, we can get nice results with topology as will
be seen in the sequel.

We see the space of configurations $Q^{\Z^2}$ as a metric space in the following way:
the distance between two configurations $c$ and $c'$ is $2^{-i}$ where
$i$ is the minimal offset (for \eg the euclidean norm) of a point where $c$ and $c'$ differ:
\[
	d(c,c') = 2^{-\min  \{ |i|,\ c(i) \not= c'(i) \}}
 \]
We could also endow $Q$ with the discrete topology and then $Q^{\Z^2}$
with the product topology, thus obtaining the same topology as the one induced by $d$.

In this topology, a basis of open sets is given through the patterns: 
for each pattern $P$, the set $\OP$ of all configurations $c$ which contains
$P$ in their center (\ie such that $c$ is equal to $P$ on its domain) is an open
set, usually called a cylinder. 
Furthermore cylinders such defined are also closed 
(their complements are finite unions of  $\mathcal{O}_{P'}$ where $P'$ are patterns of same 
domain different from $P$). Thus $\OP$'s are clopen.

\begin{proposition}
$Q^{\Z^2}$ is a compact perfect metric space (a Cantor space).
\end{proposition}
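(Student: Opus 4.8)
The plan is to verify the three defining properties of a Cantor space—compactness, being a perfect metric space, and (implicitly, for the full characterization) being totally disconnected—although since the proposition only asserts "compact perfect metric space (a Cantor space)" I would focus on compactness and perfectness, with total disconnectedness following at once from the clopen basis just exhibited. That $Q^{\Z^2}$ is a metric space is immediate from the displayed definition of $d$; one checks symmetry and the (ultrametric) triangle inequality routinely, so I would only remark on it.

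For compactness, the cleanest route is Tychonoff's theorem: $Q$ is a finite set, hence compact in the discrete topology, and $Q^{\Z^2}$ carries the product topology (as noted in the excerpt, this agrees with the topology induced by $d$), so the arbitrary product of compact spaces is compact. If one prefers to stay elementary and avoid invoking Tychonoff for an uncountable product, one can instead prove sequential compactness directly by a diagonal argument: given a sequence $(c^{(n)})_{n\in\N}$ of configurations, enumerate $\Z^2$ as $z_0,z_1,\dots$, and successively extract subsequences so that the value at $z_k$ stabilizes; the diagonal subsequence converges in $d$ to the configuration defined by those stabilized values. Since $Q^{\Z^2}$ is metrizable, sequential compactness is equivalent to compactness, so either argument suffices.

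For perfectness, I must show $Q^{\Z^2}$ has no isolated point, \ie every configuration $c$ is a limit of configurations distinct from it. This is where a little care is needed, but it is not really an obstacle: fix $c$ and, for each $n$, pick a cell $z$ with $|z|>n$ (such cells exist since $\Z^2$ is infinite) and let $c^{(n)}$ agree with $c$ everywhere except at $z$, where we change the state—possible because $|Q|\ge 2$ (if $|Q|\le 1$ there is nothing to tile, so we may assume this). Then $c^{(n)}\neq c$ but $d(c^{(n)},c)\le 2^{-n}\to 0$, so $c$ is not isolated. Finally, the excerpt has already established that the cylinders $\OP$ form a basis of clopen sets, which gives total disconnectedness (any two distinct configurations differ at some cell and are separated by complementary clopen cylinders), completing the identification with the Cantor space.

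The main obstacle, such as it is, is purely expository: being careful that the product topology really coincides with the metric topology (so that Tychonoff applies) and that the hypotheses $|Q|\ge 2$ and $|\Z^2|=\infty$ are what make perfectness work—omitting them would make the statement false for degenerate $Q$. I would phrase the proof in two short paragraphs, one for compactness via Tychonoff (or the diagonal argument as an alternative) and one for perfectness via the explicit perturbation, and close by citing the already-noted clopen basis for total disconnectedness.
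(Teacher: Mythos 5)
The paper states this proposition as a classical fact and gives no proof at all, so there is nothing to compare against; your argument is correct and is exactly the standard one an author would have in mind (compactness via Tychonoff for the finite discrete factor $Q$, or equivalently a diagonal extraction over an enumeration of $\Z^2$; perfectness by perturbing a configuration at cells of arbitrarily large norm; zero-dimensionality from the clopen cylinder basis, whence the identification with the Cantor space by Brouwer's characterization). Your caveat that perfectness requires $|Q|\ge 2$ is a legitimate hypothesis the paper leaves implicit, and flagging it is appropriate.
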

We say that a set of configurations $S$ is shift-invariant if any
shifted version of any of its configurations is also in $S$; \ie if for
every $c \in S$, and every $v \in \Z^2$ the configuration $c'$ defined
by $c'(x) = c(x+v)$ is also in $S$. We denote such a shift by $\sigma_v$.

\begin{remark}
\label{remark:preordertopo}
Our definition of pattern preorder \ref{def:preorder} can be reformulated in a topological way :
$x \preceq y$ if and only if there exists shifts
$(\sigma_i)_{i\in\N}$ such that $\displaystyle \sigma_i(y)\xrightarrow[i\to\infty]{} x$.
We say that $x$ can be extracted from $y$.
\end{remark}

For a given configuration $x$, we define the topological closure of shifted
forms of $x$:
$\Gamma (x) = \overline{\{ \sigma_v(x),\ v \in \Z^2
\}}$ where $\sigma_{i,j}$ represents a shift of vector $v$.

We see that $x \preceq y$ if and only if $\Gamma (x) \subseteq \Gamma
(y)$. Remark that 
$x$ is minimal for $\prec$ if and only if $\class{x}$ is closed.

As sets of tilings can be defined by a finite number of forbidden patterns, 
they correspond to \emph{subshifts} of finite type\footnote{
\emph{Subshifts} are closed shift-invariant subsets of $Q^{\Z^2}$}.
In the sequel, we sometimes use arbitrary
subshifts; they correspond to a set of configurations
with a potentially infinite set of forbidden patterns.

\section{Main results}
\label{sec:main}
\subsection{Basic structure}


Let us first present a few structural results.
First, the existence of minimal classes for $\prec$ is well known.  


\begin{theorem}[minimal elements]
Every set of tilings contains a minimal class for $\prec$. 
\end{theorem}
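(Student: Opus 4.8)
The plan is to prove this via a Zorn's lemma / compactness argument, using the topological reformulation of the preorder. Recall from the excerpt that $x \preceq y$ if and only if $\Gamma(x) \subseteq \Gamma(y)$, where $\Gamma(x) = \overline{\{\sigma_v(x) : v \in \Z^2\}}$. So finding a minimal class for $\prec$ is the same as finding a tiling $x$ for which $\Gamma(x)$ is minimal (with respect to inclusion) among the sets $\Gamma(y)$, $y \in \Tilings$. First I would observe that $\Tilings$ is nonempty (a tile-set that produces no tiling has nothing to be minimal among — or one adopts the convention that the statement is vacuous; I would simply assume $\Tilings \neq \emptyset$, which is the interesting case).

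Next I would set up a chain argument. Consider a totally ordered (by $\subseteq$) family $\{\Gamma(x_\alpha)\}_{\alpha \in A}$ of such closure sets. Each $\Gamma(x_\alpha)$ is a nonempty closed subset of the compact space $Q^{\Z^2}$, and moreover each is a subset of $\Tilings$, which is itself closed (being a subshift of finite type). The family has the finite intersection property since it is totally ordered and each member is nonempty. By compactness of $Q^{\Z^2}$, the intersection $\Gamma^* = \bigcap_{\alpha \in A} \Gamma(x_\alpha)$ is nonempty; it is also closed and shift-invariant (each $\Gamma(x_\alpha)$ is shift-invariant, and shift-invariance is preserved under intersection), hence $\Gamma^*$ is a nonempty subshift contained in $\Tilings$. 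Pick any $x \in \Gamma^*$. Then $\Gamma(x) \subseteq \Gamma^* \subseteq \Gamma(x_\alpha)$ for every $\alpha$ (the first inclusion because $\Gamma^*$ is closed and shift-invariant and contains $x$, so it contains the closure of the shift-orbit of $x$), so $x$ is a lower bound for the chain in the preorder. Zorn's lemma then yields a tiling $x$ minimal for $\prec$, and its class $\class{x}$ is the desired minimal class.

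The main thing to be careful about — and what I expect to be the only real subtlety — is the passage from "minimal closed shift-invariant set" back to "minimal class of a tiling": one must check that the Zorn-minimal element obtained is genuinely a tiling (it is, since $\Gamma^* \subseteq \Tilings$ and $\Tilings$ is exactly the set of tilings), and that minimality of $\Gamma(x)$ under inclusion really does correspond to minimality of $\class{x}$ under $\prec$, which is immediate from $x \preceq y \iff \Gamma(x) \subseteq \Gamma(y)$. One could alternatively phrase the whole argument purely combinatorially (a decreasing chain of pattern-sets; take the union of the complements, i.e.\ the limit set of forbidden patterns, and invoke a compactness/König's lemma argument to produce a tiling avoiding all of them), but the topological version is cleaner given the machinery already set up in Section~\ref{sec:def:topo}. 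I would also remark in passing that this is the well-known existence of minimal subshifts, which is why the statement is flagged as classical.
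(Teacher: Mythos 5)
Your argument is correct and is essentially the paper's own (topological) route: the paper simply cites the existence of a minimal subshift of $\Tilings$ and notes that every tiling in it lies in a minimal class, while you supply the standard Zorn-plus-compactness proof of that cited fact via the sets $\Gamma(x)$. The translation between minimality of $\Gamma(x)$ under inclusion and minimality of $\class{x}$ under $\prec$ is handled exactly as the paper intends, so there is nothing to object to.
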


In the context of tilings, those that belong to minimal classes are 
often called \emph{quasiperiodic}, while in language theory they are 
called \emph{uniformly recurrent} 
or \emph{almost periodic}. 
Those quasiperiodic configurations admit a nice characterization: any
pattern that appears in one of them can be found in any sufficiently
large pattern (placed anywhere in the configuration).

For a combinatorial proof of this theorem see
\cite{DBLP:journals/tcs/Durand99}.  Alternatively, 
here is a scheme of a topological proof: 
consider a minimal
subshift of $\Tilings$ (such a subshift exists, see \eg
\cite{radin92space}) then every tiling in this set is in a
minimal class.


An intensively studied class of tilings is the set of self-similar tilings.
These tilings indeed are minimal elements (quasiperiodic) but one can find other
kinds of minimal tilings (\eg the nice approach of Kari and Culik in \cite{
culik97aperiodic}).

The existence of maximal classes of tilings is not trivial and we
have to prove it:
\begin{theorem}[maximal elements]
Every set of tilings  contains a maximal class for $\prec$.
\end{theorem}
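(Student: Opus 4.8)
The plan is to invoke Zorn's lemma on the set of equivalence classes of tilings in $\Tilings$, ordered by $\prec$. To do this I need to show that every chain has an upper bound. The natural candidate for the upper bound of a chain is a tiling that ``contains all the patterns'' appearing in tilings of the chain. The key tool is compactness of $Q^{\Z^2}$ together with the fact that $\Tilings$, being a subshift of finite type, is closed (hence compact).

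First I would take a chain $(x_\alpha)$ in $(\Tilings/\!\approx, \prec)$ and let $\mathcal{P}$ be the union of all patterns appearing in some $x_\alpha$. I claim there is a tiling $y \in \Tilings$ in which every pattern of $\mathcal{P}$ appears; such a $y$ is then an upper bound for the chain. To build $y$, consider for each finite subfamily $P_1, \dots, P_k \in \mathcal{P}$: because the chain is totally ordered, all of $P_1, \dots, P_k$ appear in a single tiling $x_\alpha$ of the chain (take the largest index), so one can translate things so that, say, $P_1, \dots, P_k$ all appear within distance $n$ of the origin in some translate of $x_\alpha$. More carefully, I would fix an enumeration-free argument: for each finite $S \subseteq \mathcal{P}$, the set $K_S$ of configurations $c \in \Tilings$ such that every pattern in $S$ appears in $c$ is closed (it is $\Tilings$ intersected with a countable intersection over placements... actually ``$P$ appears in $c$'' is a closed condition as it is an intersection of clopen unions — more simply it's $\bigcap$ of closed sets, hence closed — wait, ``appears somewhere'' is a union over translates of clopen sets, so it is open, not obviously closed). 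Let me instead use the standard trick: $K_S$ is nonempty since some $x_\alpha$ witnesses it, and ``$P$ appears in $c$'' can be made into a closed condition by intersecting with the requirement that $P$ appears within a bounded window — since if $P$ appears in $x_\alpha$ it appears within some window, we can also shift. The cleanest route: the family $\{K_S\}_{S \text{ finite} \subseteq \mathcal{P}}$ has the finite intersection property (any finite union of finite sets $S$ is again finite, and is witnessed by a single chain element), and each $K_S$ is closed in the compact space $\Tilings$ provided we define $K_S = \{ c \in \Tilings : \text{each } P \in S \text{ appears in } c \text{ in a window of radius } r_P \text{ around some cell of the } r_P\text{-ball}\}$...

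Rather than belabor this, here is the clean version I would actually write: For a pattern $P$ and integer $m$, let $A_{P,m} = \{ c \in \Tilings : P \text{ appears somewhere in } c|_{[-m,m]^2} \}$; this is a finite union of clopen cylinders, hence clopen, hence closed. For $P \in \mathcal{P}$, since $P$ appears in some $x_\alpha$, it appears in $\sigma_v(x_\alpha)|_{[-m,m]^2}$ for suitable $v$ and $m$; thus $\sigma_v(x_\alpha) \in A_{P,m} \subseteq A_{P,m'}$ for all $m' \ge m$. Now for any finite $S = \{P_1,\dots,P_k\} \subseteq \mathcal{P}$, pick the chain element $x_\alpha$ dominating all of them and an $m$ large enough that each $P_i$ appears in some translate of $x_\alpha$ restricted to $[-m,m]^2$... but different $P_i$ may require different translates. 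This is the genuine obstacle, so I would handle it by the ``diagonal'' union-pattern trick used earlier in the paper: since $x_\alpha$ contains $P_1, \dots, P_k$, and $x_\alpha \in \Tilings$, we already have one configuration in $\bigcap_i \bigcup_m A_{P_i, m}$. Since each $\bigcup_m A_{P_i,m}$ is an increasing union of closed sets, we do not get closedness for free — so instead I will not intersect the $A_{P,m}$ over a fixed $m$. Instead, define $B_P = \bigcup_m A_{P,m} = \{c \in \Tilings : P \text{ appears in } c\}$, which is open, and I want a point in $\bigcap_{P \in \mathcal{P}} B_P$.

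So the final and correct approach is the compactness/finite-intersection argument applied to complements, or more directly: the sets $\Gamma(x_\alpha)$ (closures of shift-orbits, defined in the excerpt) form a chain of nonempty closed subsets of the compact space $\Tilings$ — indeed $x_\alpha \preceq x_\beta$ iff $\Gamma(x_\alpha) \subseteq \Gamma(x_\beta)$, so a $\prec$-chain of classes gives a $\subseteq$-chain of the closed sets $\Gamma(x_\alpha)$. By compactness of $\Tilings$, the union $\bigcup_\alpha \Gamma(x_\alpha)$ need not be closed, but I want an upper bound, i.e. a $y$ with $\Gamma(y) \supseteq \Gamma(x_\alpha)$ for all $\alpha$, equivalently $\Gamma(y) \supseteq \overline{\bigcup_\alpha \Gamma(x_\alpha)}$; it suffices to find $y$ with $\overline{\bigcup_\alpha \Gamma(x_\alpha)} \subseteq \Gamma(y)$. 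Take $Z = \overline{\bigcup_\alpha \Gamma(x_\alpha)}$, a closed shift-invariant subset of $\Tilings$, i.e. a subshift. I claim any $y$ lying in a minimal subshift containing a dense orbit... no — I want $\Gamma(y) = Z$, which requires $Z$ to be the orbit-closure of a single point. That holds iff $Z$ is ``topologically transitive,'' which for a chain-union it is: any two nonempty open subsets of $Z$ each meet some $\Gamma(x_\alpha)$, and since orbit closures in a chain are nested, both meet the largest such $\Gamma(x_{\alpha_0})$, which has a point with dense orbit in itself only if $x_{\alpha_0}$'s orbit is dense in $\Gamma(x_{\alpha_0})$ — true by definition of $\Gamma$. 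Hence $Z$ is transitive, and a transitive subshift of a Cantor space has a point with dense orbit; that point $y$ satisfies $\Gamma(y) = Z$, so $y$ is the desired upper bound. Then Zorn gives a maximal class. \textbf{The main obstacle} is exactly this point: producing a single tiling whose pattern-set is the union over the chain — the union of the $\Gamma(x_\alpha)$ is generally not itself an orbit closure, and one must argue via topological transitivity (equivalently, via a Baire/compactness argument that the open sets $B_P$, $P \in \mathcal{P}$, have a common point) to extract one generic $y$. I would double check whether the paper prefers a direct combinatorial construction (carefully nesting windows containing $P_1, P_1\cup P_2, \dots$ by shift-and-compactness) instead of the transitivity phrasing; both work, and the combinatorial one may fit the paper's style better, but the topological one is shorter to state given the machinery already set up.
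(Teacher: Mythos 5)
Your final argument is correct, but it reaches the chain's upper bound by a genuinely different route from the paper. The paper's proof is purely combinatorial: it enumerates the countably many patterns $p_1,p_2,\dots$ occurring in tilings of the chain, uses totality of the chain to find, for each $i$, a legally tiled pattern $p'_i$ containing $p_1,\dots,p_i$ together with an occurrence of $p'_{i-1}$, centers the $p'_i$ over one another by shift invariance, and takes the limit configuration --- exactly the ``direct combinatorial construction by nesting windows'' you suspected the paper might prefer. You instead pass to orbit closures: $Z=\overline{\bigcup_\alpha\Gamma(x_\alpha)}$ is a subshift, the nestedness of the $\Gamma(x_\alpha)$ makes $Z$ topologically transitive, and a transitive subshift of a compact metric space has a point $y$ with dense orbit (Baire category), so $\Gamma(y)=Z\supseteq\Gamma(x_\alpha)$ for every $\alpha$ and $y$ is an upper bound. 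Both arguments are sound, and both in fact yield a \emph{least} upper bound (your $\Gamma(y)$ is the smallest subshift containing every $\Gamma(x_\alpha)$), which the paper emphasizes. The trade-off: your topological version is shorter once $\Gamma$ and the topological reformulation of $\preceq$ are available, but it leans on compactness and Baire; the paper's nested-pattern construction is more elementary and, as the paper notes, still works when $Q$ or $\ForbPat$ is countably infinite, where compactness fails. One editorial point: the first two thirds of your write-up are abandoned attempts (the clopen sets $A_{P,m}$, the observation that ``$P$ appears in $c$'' is open but not closed); only the last paragraph is the actual proof, and the rest should be cut --- though you correctly diagnosed there why a naive finite-intersection-of-closed-sets argument cannot work and why either nesting or a Baire/transitivity argument is needed.
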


\begin{proof}
	Let us prove that any increasing chain has a least upper bound. The theorem
	is then obtained by Zorn's lemma.
	
    Consider $T_i$ an increasing chain of tiling classes.  Consider the
    set $P$ of all patterns that this chain contains.  As the set of
    all patterns is countable, $P$ is countable too, $P=\{p_{i}\}_{i\in \N}$.

    Now consider two tilings $T_k$ and $T_l$, any pattern that appears in $T_k$
    or $T_l$ appears in $T_{\max(k,l)}$.
    Thus we can construct a sequence of patterns $(p'_i)_{i\in\N}$ such that
    $p'_i$
    contains all $p_{j}$, $j\leq i$ and $p'_{i-1}$.  
    Note that $p'_{i}$ is correctly tiled by the considered tile-set.
    
    The sequence of patterns $p'_{i}$ grows in size.  By shift
    invariance, we can center each $p'_{i}$ by
    superimposing an instance of $p'_{i-1}$ found in $p'_{i}$ over
    $p'_{i-1}$. We can conclude that this sequence has a limit and
    this limit is a tiling that contains all $p_i$, hence is an upper
    bound for the chain $T_i$.  
\end{proof}

Note that this proof also works when the \emph{set of states} $Q$
and/or the \emph{set of forbidden patterns} $\ForbPat$ are countably
infinite (neither compactness nor finiteness is assumed).
However it is easy to construct examples where $Q$ is infinite and there does not exist a minimal tiling.

Note that we actually prove that every chain has not only a upper bound, but
also a least upper bound.
Such a result does not hold for lower bound: We can easily build chains
with lower bounds but no greatest lower bound.

\subsection{Cantor-Bendixson}
In this section we use the topological derivative and define
Cantor-Bendixson rank; then we discuss properties of sets of tilings from this
viewpoint. Most of the results presented in this section are direct translations
of well known results in topology \cite{kura}.

A configuration $c$ is said to be \emph{isolated} in a set of configurations $S$ if
there exists a pattern $P$ (of domain $V$) such that $c$ is the only configuration in 
$S$ that contains the pattern $P$ in its center 
($\forall x \in V, c(x) = P(x)$). We say that $P$ \emph{isolates} $c$.
This corresponds to the topological notion: 
a point is isolated if there exists an open set that contains only this point.
As an example, in Fig.~\ref{fig:cantorbendixsonrang}, the tilings $A_i$ are
isolated, the pattern isolating an $A_i$ is the boundary between red, white,
black and green parts of it.


The topological derivative of a set $S$ is formed by its elements that are not
isolated. We denote it by $S'$.

If $S$ is a set of tilings, or more generally a subshift, we get some more
properties. 
If $P$ isolates a configuration in $S$ then a shifted form of $P$ isolates a
shifted form of this configuration.
Any configuration of $S$ that
contains $P$ is isolated.

As a consequence, if $S = \Tilings$,
then $S' = {\mathcal T}_{\tau'}$ where 
$\tau'$ forbids the set\\ ${\ForbPat \cup \{P | P \text{\ isolates some configuration in\ }
\Tilings\}}$.

Note that $S'$ is not always a set of tilings, but remains a subshift.
Let us examine the example shown in Fig.~\ref{fig:cantorbendixsonrang}.
$S'$ is $S$ minus the classes $A_{i}$. However any set of tilings (subshift of
finite type) that contains $C,B_i$ and $D$ also contains $A_i$.
Hence $S'$ is not of finite type in this example.

We define inductively $S^{(\lambda)}$ for any ordinal $\lambda$ :
\begin{itemize}
\item $S^{(0)}=S$
\item $S^{(\alpha +1)}=(S^{(\alpha )})'$
\item $S^{(\lambda)}=\bigcap_{\alpha <\lambda}S^{(\alpha)}$ when $\lambda$ is a
limit ordinal.
\end{itemize}

Notice that there exists a countable ordinal $\lambda$ such that $S^{(\lambda +1)} = S^{(\lambda)}$.
Indeed, at each step of the induction, the set of forbidden patterns
increases, and there is at most countably many patterns.
We call the least such ordinal the \emph{Cantor-Bendixson rank of $S$}
\cite{kura}.

An element $c$ is \emph{of rank $\lambda$} in $S$ if $\lambda$ is the least ordinal
such that $c\not\in S^{(\lambda)}$.
If no such $\lambda$ exists, $c$ is of infinite rank. 
For instance all strictly quasiperiodic configurations (quasiperiodic configurations that
are not periodic)  are of infinite rank.
We write $\rk(x)$ the rank of $x$.

An example of what Cantor-Bendixson ranks look like is shown in Fig.~\ref{fig:cantorbendixsonrang},
the first row contains the tilings of rank $1$, the second row the ones of rank
$2$ etc.

\begin{figure}[ht]
\begin{center}

\includegraphics{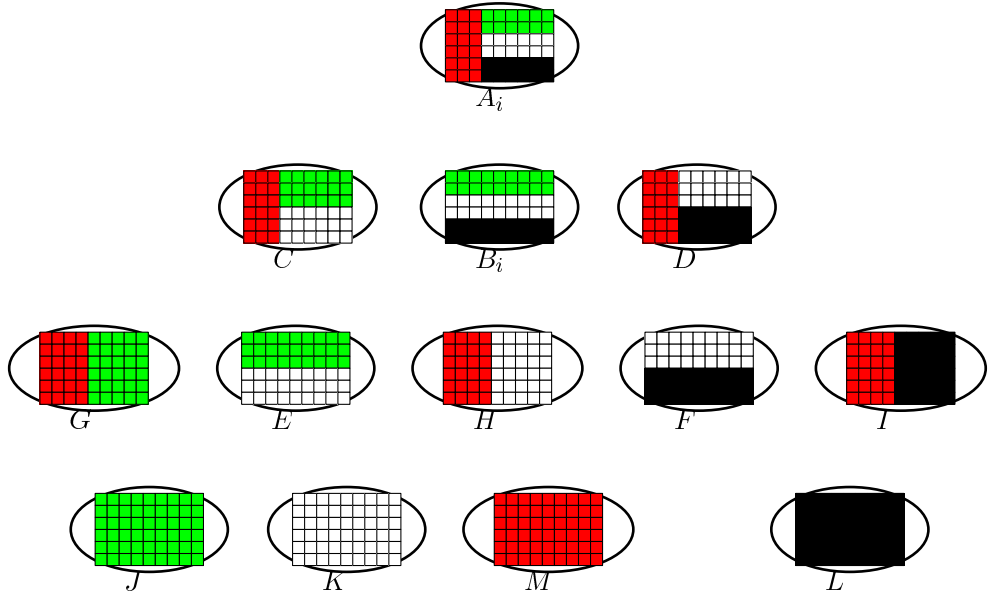}
\end{center}
\caption{Cantor-Bendixson ranks}
\label{fig:cantorbendixsonrang}
\end{figure}

Ranked tilings have many interesting properties. First of all, as any
$\Tilings^{(\lambda)}$ is shift-invariant, a tiling has the same rank as its
shifted forms.

Note that at each step of the inductive definition, the set of isolated
points is at most countable (there are less isolated points than patterns).
As a consequence, if all tilings are ranked, $\Tilings$ is countable, as a
countable union (the Cantor-Bendixson rank is countable) of countable sets.

The converse is also true:
\begin{theorem}
\label{thm:countableequivallranked}
$\Tilings$ is countable if and only if all tilings are ranked.
\end{theorem}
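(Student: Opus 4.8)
The forward direction is already established in the text: at each step of the Cantor–Bendixson derivative only countably many isolated points are removed, the rank is a countable ordinal, and a countable union of countable sets is countable, so if all tilings are ranked then $\Tilings$ is countable. The work is in the converse, so the plan is to prove the contrapositive: if some tiling is \emph{not} ranked, then $\Tilings$ is uncountable.

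First I would fix $\lambda$ to be the Cantor–Bendixson rank of $\Tilings$, so that $S^{(\lambda+1)} = S^{(\lambda)}$, and set $S^\infty = \Tilings^{(\lambda)}$. The unranked tilings are exactly the elements of $S^\infty$, and by assumption $S^\infty \neq \emptyset$. The key structural fact is that $S^\infty$ is a nonempty subshift with $(S^\infty)' = S^\infty$, i.e. it is a \emph{perfect} closed subset of the Cantor space $Q^{\Z^2}$: no point of $S^\infty$ is isolated \emph{within} $S^\infty$, since an isolating pattern would witness that the point is removed at stage $\lambda+1$, contradicting stabilization. A nonempty perfect subset of a Cantor space (or indeed of any complete metric space) is uncountable — this is the classical consequence of the Cantor–Bendixson analysis, provable directly by the standard binary-tree construction. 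Since $S^\infty \subseteq \Tilings$, the set $\Tilings$ is uncountable.

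The one point that needs care — and which I expect to be the main obstacle to write cleanly rather than the main mathematical difficulty — is justifying that ``isolated in $S^{(\lambda)}$'' coincides with ``isolated by a pattern'' in the sense used in the paper, and hence that stabilization of the derivative really means $(S^\infty)' = S^\infty$ with $S^\infty$ perfect in the relative topology. This is exactly the remark made earlier in the excerpt that the topological notion of isolation (an open set meeting $S$ in a single point) agrees with the combinatorial one (a cylinder $\OP$ meeting $S$ in a single point), because cylinders form a basis of clopen sets; I would invoke that remark. With that in hand, $S^\infty$ perfect gives the binary tree: build a family of patterns $\{P_w : w \in \{0,1\}^*\}$, each correctly tiled and compatible with the extension order, so that $P_{w0}$ and $P_{w1}$ are incompatible extensions of $P_w$ (possible precisely because no point of $S^\infty$ is isolated, so every cylinder meeting $S^\infty$ splits), and each infinite branch yields, by compactness of $Q^{\Z^2}$, a distinct configuration in $S^\infty$; distinct branches give distinct configurations since they eventually disagree on some $P_w$. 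This injects $2^{\aleph_0}$ into $\Tilings$, completing the proof.
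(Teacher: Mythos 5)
Your proof is correct and takes essentially the same route as the paper: both reduce the theorem to the observation that $\Tilings^{(\lambda)}$ (for $\lambda$ the Cantor--Bendixson rank) is a perfect closed set, hence either empty or uncountable, with the forward direction being the counting argument already given in the text. The only difference is in how the classical fact ``a nonempty perfect compact set is uncountable'' is justified --- the paper invokes Baire's theorem, you use the binary-tree (Cantor scheme) construction --- and both are standard and valid here.
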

\begin{proof}
Let $\lambda$ be the Cantor-Bendixson rank of $\Tilings$.
$\Tilings^{(\lambda)} = \Tilings^{(\lambda+1)}$ is a perfect set (no points are isolated).
As a consequence, $\Tilings^{(\lambda)}$ must be either empty or uncountable (classical
application of Baire's Theorem : $\Tilings^{(\lambda)}$ is compact thus has the
Baire property and a non empty perfect set with the Baire property cannot be
countable).

As $\Tilings$ is countable, $\Tilings^{(\lambda)}= \emptyset$.
\end{proof}

\begin{remark}
\label{rq:minperio}
Strictly quasiperiodic tilings only appear when the number of possible tilings
is uncountable \cite{DBLP:journals/tcs/Durand99}. As a consequence, if all tilings are
ranked, strictly quasiperiodic tilings do not appear, thus all minimal tilings are
periodic. In this case we therefore may expect all tilings to be somehow simple.
We'll study this case later in this paper.
\end{remark}

As the topology of $Q^{\Z^2}$ has a basis of clopens $\OP$,
$Q^{\Z^2}$ is a $0$-dimensional space,
thus any subset of $Q^{\Z^2}$ is also $0$-dimensional.
As any (non empty) perfect $0$-dimensional compact metric space is isomorphic to
the Cantor Space we obtain:

\begin{theorem}[Cardinality of tiling spaces]
A set of tilings is either finite, countable or has the cardinality of continuum.
\end{theorem}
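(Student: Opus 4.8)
The plan is to invoke the Cantor–Bendixson analysis already set up, together with the structural remarks immediately preceding the statement. Let $S=\Tilings$ be a set of tilings and let $\lambda$ be its Cantor–Bendixson rank, so that $S^{(\lambda)}=S^{(\lambda+1)}$ is perfect. First I would split into the two cases $S^{(\lambda)}=\emptyset$ and $S^{(\lambda)}\neq\emptyset$.

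In the first case, $S=\bigcup_{\alpha\le\lambda}\bigl(S^{(\alpha)}\setminus S^{(\alpha+1)}\bigr)$, and each difference $S^{(\alpha)}\setminus S^{(\alpha+1)}$ consists precisely of the points isolated in $S^{(\alpha)}$, hence is at most countable (as already noted, there are no more isolated points than patterns, and there are countably many patterns). Since $\lambda$ is a countable ordinal, $S$ is a countable union of countable sets, therefore finite or countable. In the second case, $S^{(\lambda)}$ is a non-empty perfect subset of $Q^{\Z^2}$; it is closed in the compact space $Q^{\Z^2}$, hence itself compact, and it is metric. As observed just before the statement, $Q^{\Z^2}$ is $0$-dimensional because the cylinders $\OP$ form a basis of clopen sets, so every subspace—in particular $S^{(\lambda)}$—is $0$-dimensional. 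By the classical Brouwer characterization, any non-empty perfect, $0$-dimensional, compact metric space is homeomorphic to the Cantor set, so $|S^{(\lambda)}|=2^{\aleph_0}$; a fortiori $|S|=2^{\aleph_0}$. Finally, since $S\subseteq Q^{\Z^2}$ and $|Q^{\Z^2}|=2^{\aleph_0}$, we also have $|S|\le 2^{\aleph_0}$, so $|S|=2^{\aleph_0}$ exactly. This exhausts the three possibilities.

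The only genuinely non-routine ingredient is the Brouwer characterization of the Cantor space, which is exactly the external topological fact the paragraph preceding the theorem is preparing us to cite, so no real obstacle arises; the remaining work is just bookkeeping of the Cantor–Bendixson decomposition, which the excerpt has already carried out. One should be mildly careful that the countable-union-of-countable-sets argument is being applied to an honestly countable index set (the ordinal $\lambda$ is countable by construction of the Cantor–Bendixson rank), and that in the perfect case we use compactness of $S^{(\lambda)}$ rather than of $S$ itself—both points are immediate from what is established earlier.
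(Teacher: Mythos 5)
Your proposal is correct and follows essentially the same route as the paper: the Cantor--Bendixson decomposition handles the case where the perfect kernel is empty (countable union of countable sets of isolated points), and the characterization of non-empty perfect, $0$-dimensional, compact metric spaces as the Cantor space handles the other case. The paper states this argument more tersely (citing the clopen basis $\OP$ and the Cantor-space characterization immediately before the theorem), but the content is identical.
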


Note that the proof of this result does not make use of the continuum hypothesis.

We now present the connection between our preorder $\prec$ and the Cantor-Bendixson rank.

\begin{prop}
Let $x$ and $y$ be two ranked tilings
such that $x\prec y$. Then $\rk(x)>\rk(y)$.
\end{prop}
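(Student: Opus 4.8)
The plan is to show that $x \prec y$ forces $x$ to die strictly later than $y$ in the Cantor-Bendixson hierarchy, i.e.\ $\rho(x) > \rho(y)$. The key observation is a monotonicity statement: if $x \preceq y$, then for every ordinal $\lambda$, whenever $y \in \Tilings^{(\lambda)}$ we also have $x \in \Tilings^{(\lambda)}$. I would prove this by transfinite induction on $\lambda$. The base case $\lambda = 0$ is trivial. The limit case is immediate from the definition $\Tilings^{(\lambda)} = \bigcap_{\alpha < \lambda} \Tilings^{(\alpha)}$. For the successor step, suppose the claim holds for $\alpha$ and that $y \in \Tilings^{(\alpha+1)} = (\Tilings^{(\alpha)})'$; I must show $x \in (\Tilings^{(\alpha)})'$, i.e.\ that $x$ is not isolated in $\Tilings^{(\alpha)}$. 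By the inductive hypothesis $x \in \Tilings^{(\alpha)}$, so it remains to rule out isolation.

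Here is the heart of the argument. Suppose for contradiction that some pattern $P$ isolates $x$ in $\Tilings^{(\alpha)}$. Since $x \preceq y$, the pattern $P$ appears in $y$ too; so some shifted form $\sigma_v(y)$ contains $P$ in its center, hence $\sigma_v(y) \in \mathcal{O}_P \cap \Tilings^{(\alpha)}$ (using that $\Tilings^{(\alpha)}$ is shift-invariant and $y \in \Tilings^{(\alpha)}$). By the isolation property, $\sigma_v(y) = x$, so $y$ and $x$ are shifts of one another, which gives $x \approx y$, contradicting $x \prec y$. Therefore no pattern isolates $x$ in $\Tilings^{(\alpha)}$, so $x \in \Tilings^{(\alpha+1)}$, completing the induction.

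From the monotonicity claim we conclude as follows. Let $\mu = \rho(y)$, so $y \notin \Tilings^{(\mu)}$ but $y \in \Tilings^{(\alpha)}$ for all $\alpha < \mu$. By the claim, $x \in \Tilings^{(\alpha)}$ for all $\alpha < \mu$, and in particular $x \in \Tilings^{(\mu)}$ if $\mu$ is a limit, while if $\mu = \nu + 1$ the same contradiction argument as above applied at level $\nu$ (where $y \in \Tilings^{(\nu)}$ is isolated, being removed at the next step) shows $x$ cannot be isolated in $\Tilings^{(\nu)}$, so again $x \in \Tilings^{(\mu)}$; either way $\rho(x) > \mu$ unless $x = y$ up to shift. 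Actually it is cleanest to phrase the whole thing uniformly: the monotonicity claim gives $\rho(x) \geq \rho(y)$, and equality would force, at the level $\lambda$ where both leave the hierarchy, that $x$ is isolated by a pattern that also occurs in $y$, yielding $x \approx y$ by the shift argument — a contradiction. Hence $\rho(x) > \rho(y)$. The main obstacle is getting the successor step of the induction exactly right: one must be careful that isolation is tested \emph{within} $\Tilings^{(\alpha)}$ and that the witness shift $\sigma_v(y)$ genuinely lands in that set, which is exactly where shift-invariance of each $\Tilings^{(\alpha)}$ is used.
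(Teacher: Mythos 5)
Your proof is correct and its core is exactly the paper's argument: a pattern $P$ isolating $x$ in some derivative $\Tilings^{(\alpha)}$ must also appear in $y$, so by shift-invariance of $\Tilings^{(\alpha)}$ some shift of $y$ would contain $P$ centrally and hence equal $x$, forcing $x \approx y$ and contradicting $x \prec y$. The paper applies this observation once, at the level $\lambda$ where $x$ gets isolated, to conclude directly that $y \notin \Tilings^{(\lambda)}$ and hence $\rk(y) \leq \lambda < \rk(x)$; your transfinite induction establishing $\rk(x) \geq \rk(y)$ followed by a separate strictness step is valid but an unnecessary detour.
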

\begin{proof}
By definition of $\prec$, any pattern that appears in $x$ also appears in $y$.
As a consequence, if $P$ isolates $x$ in $S^{(\lambda)}$, then $x$ is the only
tiling of $S^{(\lambda)}$ that contains $P$ hence $y$ cannot be
in $S^{(\lambda)}$.
\end{proof}
Thus tilings of Cantor-Bendixson rank $1$ (minimal rank) are
maximal tilings for $\prec$. 
Conversely if all tilings are ranked, tilings 
of maximal rank exist and are minimal tilings. These tilings are periodic, see
remark \ref{rq:minperio}.

Another consequence is that if all tilings are ranked, 
no infinite increasing chain for $\prec$ exists because
such chain would induce an infinite decreasing chain of ordinals:

\begin{theorem}
If $\Tilings$ is countable, there is no infinite increasing chain for $\prec$.
\end{theorem}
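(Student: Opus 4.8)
The plan is to read this off from Theorem~\ref{thm:countableequivallranked} together with the proposition above relating $\prec$ to the Cantor-Bendixson rank, using the well-foundedness of the ordinals. First, since $\Tilings$ is countable, Theorem~\ref{thm:countableequivallranked} guarantees that every tiling of $\Tilings$ is ranked; hence $\rk(x)$ is a well-defined ordinal for each $x \in \Tilings$.

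Next I would argue by contradiction. Assume there is an infinite increasing chain $x_0 \prec x_1 \prec x_2 \prec \cdots$ in $\Tilings$. Every $x_i$ is ranked and $x_i \prec x_{i+1}$, so the proposition above gives $\rk(x_i) > \rk(x_{i+1})$ for all $i \in \N$. Therefore $\rk(x_0) > \rk(x_1) > \rk(x_2) > \cdots$ is an infinite strictly decreasing sequence of ordinals, which is impossible. Hence no such chain exists.

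I do not expect any real obstacle here: the whole content is already contained in the two results of this subsection, and the last step is just the standard fact that the ordinals admit no infinite descending sequence. The only mild point worth making explicit is that a chain \emph{for $\prec$} is automatically strict (each $x_i \prec x_{i+1}$ means $x_i \preceq x_{i+1}$ but $x_i \not\approx x_{i+1}$), which is precisely what makes the rank inequalities strict; and, if one wants the slightly stronger statement, the same argument forbids an infinite increasing chain of any order type, since such a chain contains an increasing $\omega$-subchain to which the above applies verbatim.
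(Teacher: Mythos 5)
Your proof is correct and is exactly the argument the paper gives (in the sentence immediately preceding the theorem): countability gives that all tilings are ranked by Theorem~\ref{thm:countableequivallranked}, the proposition turns an increasing $\prec$-chain into a strictly decreasing sequence of ordinals, and well-foundedness of the ordinals finishes it. No issues.
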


\subsection{The countable case}

In the context of Cantor-Bendixson ranks,
the case of countable tilings was revealed as an important particular case.
Let us study this case in more details.

If the number of tilings is finite, the situation is easy:
any tiling is periodic.
Our aim is to prove that in the countable case, there exists a tiling
$c$ which has exactly one vector of periodicity (such a tiling is sometimes called weakly
periodic in the literature).

We split the proof in three steps : 
\begin{itemize}
	\item There exists a tiling which is not minimal;
	\item There exists a tiling $c$ which is at level
	  $1$, that is such that all tilings less than $c$ are minimal;
	\item Such a tiling has exactly one vector of periodicity.
\end{itemize}	
The first step is a result of independent interest.
To prove the last two steps we use Cantor-Bendixson ranks.

Recall that in our case any minimal tiling is periodic
(no strictly quasiperiodic tiling appears in a countable setting
\cite{DBLP:journals/tcs/Durand99}).
The first step of the proof may thus be reformulated:
\begin{theorem}
\label{thm:peralorsperfini}
	If all tilings produced by a tile-set are periodic, then there are only
	finitely many of them.
\end{theorem}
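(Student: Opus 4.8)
The plan is to argue by contraposition via a compactness argument. Suppose a tile-set $\tau$ produces infinitely many periodic tilings; I want to produce a non-periodic tiling, contradicting the hypothesis. The starting observation is that each periodic tiling of the plane (periodic meaning the group of periods has finite index in $\Z^2$) is determined by a finite pattern that tiles a fundamental domain and repeats; in particular a periodic tiling has a well-defined ``smallest period'' in the sense that the index $[\Z^2 : \Lambda_c]$ of its lattice of periods $\Lambda_c$ is a positive integer. If there were a uniform bound $N$ on all these indices, then every periodic tiling would be built out of patterns of bounded size (at most $N$ cells in a fundamental domain), hence there could only be finitely many such tilings — so under our assumption the indices $[\Z^2:\Lambda_c]$ are unbounded over the periodic tilings produced by $\tau$.

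Next I would extract a limit. Pick a sequence $(c_n)$ of periodic tilings with $[\Z^2:\Lambda_{c_n}] \to \infty$. By compactness of $Q^{\Z^2}$ (the proposition stated above that it is a Cantor space, hence compact), and since $\Tilings$ is closed, the sequence has a subsequence converging to some tiling $c \in \Tilings$. The crux is to show $c$ is not periodic. The idea is that for each $n$ the tiling $c_n$, having a large-index period lattice, cannot contain any nonzero vector of $\Z^2$ of small norm as a period; more precisely one shows that if $v \ne 0$ is fixed, then for $n$ large enough $v \notin \Lambda_{c_n}$, i.e.\ $c_n$ disagrees with its $v$-shift at some cell. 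One must be careful: ``large index'' does not immediately forbid short period vectors (a lattice can be $\Z v_1 \oplus \Z(Mv_2)$), so instead I would choose the $c_n$ more carefully — pass to a subsequence so that the shortest nonzero period vector of $c_n$ has norm tending to infinity. This is possible because the number of tilings whose period lattice contains a given nonzero vector $v$ is finite (such a tiling is periodic in direction $v$, and combined with being periodic overall, it is determined by a bounded pattern, and there are finitely many $v$ of bounded norm); hence only finitely many of our infinitely many periodic tilings can have a short period, and we can discard them.

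With this choice, fix any $v \in \Z^2 \setminus \{0\}$. For all large $n$, $v$ is not a period of $c_n$, so there is a cell $x_n$ with $c_n(x_n) \ne c_n(x_n + v)$. Using shift-invariance we may translate $c_n$ so that this witnessing disagreement occurs within a fixed bounded window around the origin (there are only finitely many relative positions to consider, so passing to a further subsequence we fix the window and even the cell $x$). Then in the limit $c$ we get $c(x) \ne c(x+v)$, so $v$ is not a period of $c$. Since $v$ was arbitrary, $c$ has no nonzero period at all, hence $c$ is a non-periodic tiling produced by $\tau$ — contradiction. The main obstacle, and the step demanding the most care, is exactly the ``diagonal'' extraction: ensuring simultaneously that the disagreement witnesses for all the relevant vectors $v$ can be pinned into bounded windows so that they survive the limit; this is handled by enumerating $\Z^2 \setminus \{0\}$ as $v_1, v_2, \dots$ and taking nested subsequences (a standard diagonal argument), using at each stage that only finitely many of the $c_n$ can have $v_k$ as a period.
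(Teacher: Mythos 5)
Your overall strategy (pass to a subsequence of periodic tilings whose least period tends to infinity and exhibit a non-periodic limit) is a reasonable starting point, but the decisive step is missing, and it is exactly the step the paper warns about when it notes that ``a compactness argument is not sufficient''. The problem is the diagonal extraction at the end. For a single $v_1$ you can indeed translate each $c_n$ so that a cell witnessing $c_n(x)\neq c_n(x+v_1)$ sits at the origin, and the limit then fails to have period $v_1$; but this rules out only that one vector, and the limit may still be periodic with some other lattice. To also rule out $v_2$ you need a $v_2$-witness inside a \emph{fixed} bounded window of the already-translated $c_n$, and nothing guarantees this: in $c_n$ the cells witnessing failure of $v_j$-periodicity form a union of cosets of the period lattice $\Lambda_{c_n}$, whose covering radius tends to infinity, so the nearest $v_2$-witness to your pinned $v_1$-witness may drift to infinity with $n$; re-translating to capture it destroys the $v_1$-witness. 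The fact that only finitely many $c_n$ have $v_k$ as a period gives \emph{existence} of witnesses, not control of their \emph{positions}, so the nested-subsequence scheme does not close. The paper overcomes precisely this by pinning a single feature that kills all periods at once: it takes the largest square of copies of the periodic block $M$ inside $X_i$, centres a defect sitting on the boundary of that square, and in the limit obtains a defect adjacent to a full quarter-plane of copies of $M$ --- a configuration that no periodic tiling can contain. Some analogue of this idea is what your plan is missing.

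There is a secondary gap: your justification that only finitely many of the tilings can have a fixed nonzero $v$ as a period is wrong as stated. A doubly periodic tiling with period $v$ is \emph{not} determined by a bounded pattern: its period lattice can be $\Z v\oplus\Z(Mw)$ with $M$ arbitrarily large, so its fundamental domain is unbounded. The claim itself is true under your standing hypothesis, but it is essentially the one-dimensional case of the very theorem being proved (the $v$-periodic tilings form a one-dimensional subshift all of whose points are periodic, and one must show such a subshift is finite); it therefore needs its own argument, for instance via the graph presentation of one-dimensional subshifts of finite type, rather than the one-line justification you give.
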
	

It is important to note that a compactness argument is not sufficient to prove this theorem,
there is no particular reason for a converging sequence of periodic tilings with
strictly increasing period to converge towards a non periodic tiling: 
there indeed exist such sequences with a periodic limit. 

\begin{proof}

\emph{We are in debt to an anonymous referee who simplified our original proof.}
	
	Suppose that a tile-set produces infinitely many tilings, but  only
	periodic ones.

	As the set of tilings is infinite and compact, one of them is obtained as a limit of
	the others: There exists a tiling $X$ and a sequence $X_i$ of distinct
	tilings such that $X_i \rightarrow X$.

	Now by assumption $X$ is periodic of period $p$ for some $p$. We may
	suppose that no $X_i$ has $p$ as a period.
	Denote by $M$ the pattern which is repeated periodically.

	$X_i \rightarrow X$ means that $X_i$ contains in its center a square
	of size $q(i) \times q(i)$ of copies of $M$, where $q$ is a growing function.
	
	For each $i$, consider the largest square of $X_i$ consisting only of
	copies of $M$. Such a largest square exists, as it is bounded by a
        period of $X_i$. Let $k$ be the size of this square.
		Now, the boundary of this square contains a $p \times p$ pattern which
	is not $M$ (otherwise this is not the largest square).
	
	By shifting $X_i$ so that this pattern is at the center, 
	we obtain a tiling $Y_i$ which contains a $p \times p$ pattern at the
        origin which is	not $M$ adjacent to a $k/2 \times k/2$
	square consisting of copies of $M$ in one of the four quarter planes.
	
        By taking a suitable limit of these $Y_i$, we will obtain a tiling which
	contains a $p \times p$ pattern which is not $M$ in its center
	adjacent to a quarter plane of copies of $M$.

	Such a tiling cannot be periodic.
\end{proof}

This proof does not assume that the set of forbidden patterns $\ForbPat$ is finite, therefore it is still
valid for any shift-invariant closed subset (subshift) of $Q^{\Z^2}$.

Now we prove stronger results about the Cantor-Bendixson rank of $\Tilings$.
Let $\alpha$ be the Cantor-Bendixson rank of $\Tilings$. 
Since $(\Tilings)^{(\alpha)} = \emptyset$, $\alpha$ cannot be a limit ordinal:
Suppose that it is indeed a limit ordinal, therefore $\bigcap_{\beta < \alpha} (\Tilings)^{(\beta)} = \emptyset$
is an empty intersection of closed sets in $Q^{\Z^2}$ 
therefore by compactness there exists $\gamma < \alpha$ such that
$\bigcap_{\beta < \gamma} (\Tilings)^{(\beta)} = \emptyset$ and therefore $\Tilings$ can not have rank $\alpha$.
Hence $\alpha$ is a successor ordinal, $\alpha =\beta + 1$.

However, we can refine this result : 

\begin{lemma}
\label{cbranklambdaplusdeux}
The rank of $\Tilings$ cannot be the successor of a limit ordinal.
\end{lemma}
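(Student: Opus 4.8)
Suppose toward a contradiction that the Cantor--Bendixson rank of $\Tilings$ is $\alpha = \lambda + 1$ where $\lambda$ is a limit ordinal. Then $\Tilings^{(\lambda+1)} = \emptyset$ but $\Tilings^{(\lambda)} \neq \emptyset$, so $\Tilings^{(\lambda)}$ is a nonempty subshift all of whose points are isolated in it; since it is compact, it must be finite. So $\Tilings^{(\lambda)}$ is a nonempty finite subshift, hence consists entirely of periodic tilings. Pick any $c \in \Tilings^{(\lambda)}$; it is periodic and it is isolated in $\Tilings^{(\lambda)}$ by some pattern $P$. The idea is to show this situation is impossible by examining how $c$ can fail to be isolated at the earlier finite stages — i.e. we use that $\lambda$ is a limit to "run out of room".

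\textbf{Key steps.} First, because $c$ is periodic with some period region $M$ and is isolated in $\Tilings^{(\lambda)}$ but not (for a limit $\lambda$) isolated in any single $\Tilings^{(\beta)}$ with $\beta < \lambda$, for each $\beta < \lambda$ there is another tiling $c_\beta \in \Tilings^{(\beta)}$, distinct from all shifts of $c$, agreeing with $c$ on a large central square — large enough that $c_\beta$ is not periodic of period of $c$ (the argument of Theorem~\ref{thm:peralorsperfini} applies: a nonperiodic tiling agreeing with $c$ arbitrarily far out exists among the $c_\beta$, since the periodic ones of bounded period are finitely many). Second, following the construction in the proof of Theorem~\ref{thm:peralorsperfini}, by shifting each $c_\beta$ to recenter on a $p\times p$ pattern on the boundary of its largest $M$-square and taking a limit, we extract a tiling $d$ that has a quarter-plane of copies of $M$ adjacent to a central $p\times p$ pattern which is not $M$; such a $d$ is not periodic. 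Third — and this is the crux — one argues that $d$ (or a tiling extracted from the $c_\beta$) must lie in $\Tilings^{(\lambda)} = \bigcap_{\beta<\lambda}\Tilings^{(\beta)}$: each stage of the limiting process can be taken inside $\Tilings^{(\beta)}$ for cofinally many $\beta$, and by a diagonalization over a cofinal $\omega$-sequence in $\lambda$ (recall ranks are countable, so $\lambda$ has countable cofinality) one keeps the limit in $\bigcap_\beta \Tilings^{(\beta)}$. But $d$ is not periodic, contradicting that $\Tilings^{(\lambda)}$ is finite (hence all periodic).

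\textbf{Main obstacle.} The delicate point is the third step: ensuring the extracted nonperiodic tiling genuinely survives all derivatives below $\lambda$, not merely that it lies in $\Tilings$. This requires being careful that the recentering-and-limit construction can be performed simultaneously respecting membership in $\Tilings^{(\beta)}$ for a cofinal sequence $\beta_n \to \lambda$; concretely, choose $c_{\beta_n} \in \Tilings^{(\beta_n)}$ with its $M$-square of side $\geq n$, recenter, and diagonalize to get $d = \lim_n Y_{\beta_n}$. Since each $Y_{\beta_m}$ for $m \geq n$ lies in the closed set $\Tilings^{(\beta_n)}$, the limit $d$ lies in every $\Tilings^{(\beta_n)}$, hence in $\bigcap_n \Tilings^{(\beta_n)} = \Tilings^{(\lambda)}$ (using cofinality and monotonicity of the derivatives). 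Then $d \in \Tilings^{(\lambda)}$ is nonperiodic, contradicting finiteness of $\Tilings^{(\lambda)}$, and we conclude $\alpha$ is not the successor of a limit ordinal.
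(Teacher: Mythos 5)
Your proof is correct, and its skeleton is the one the paper uses: observe that $\Tilings^{(\lambda)}$ is finite (hence all periodic), manufacture for a cofinal sequence $\beta_n\to\lambda$ tilings in $\Tilings^{(\beta_n)}$ carrying a witness of non-periodicity at the origin, and use closedness, shift-invariance and monotonicity of the derivatives to push a compactness limit into $\bigcap_n\Tilings^{(\beta_n)}=\Tilings^{(\lambda)}$, contradicting finiteness. Where you differ is in how the approximants are produced and what witness is centered. The paper takes $p$ to be the lcm of all periods occurring in $\Tilings^{(\lambda)}$, picks an isolated point $x_i$ of each $\Tilings^{(\beta_i)}$ (these exist because the rank exceeds $\beta_i$, are pairwise distinct since their ranks differ, so all but finitely many fail to be $p$-periodic), recenters a non-$p$-periodic $2p\times 2p$ pattern, and notes the limit is not $p$-periodic at its center. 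You instead fix one $c\in\Tilings^{(\lambda)}$, use that it cannot be isolated in any $\Tilings^{(\beta)}$ with $\beta<\lambda$ (else it would not survive to stage $\lambda$) to find $c_{\beta_n}\neq c$ agreeing with $c$ on a large central square --- hence not $p$-periodic, since a $p$-periodic tiling agreeing with $c$ on a $p\times p$ block equals $c$ --- and then rerun the quarter-plane construction of Theorem~\ref{thm:peralorsperfini}. Both work; your route needs two extra checks that you handle implicitly (the largest central square of copies of the fundamental domain is finite precisely because $c_{\beta_n}\neq c$, and a periodic tiling cannot contain a quarter-plane of copies of $M$ adjacent to a non-$M$ block), but in exchange it avoids taking the lcm over all of $\Tilings^{(\lambda)}$, since the quarter-plane argument excludes periodicity of any period whatsoever. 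The cofinality point you flag as the crux is indeed the only place where ``$\lambda$ is a limit'' enters, and your diagonalization handles it exactly as the paper does.
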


\begin{proof}
Suppose that $\beta = \cup_{i<\omega}\beta_i$.
Since $(\Tilings)^{(\beta+1)} = \emptyset$, $(\Tilings)^{(\beta)}$ is finite
(otherwise it would have a non-isolated point by compactness), it contains only periodic tilings.

Let $p$ be the least common multiple of the periods of the tilings in
$(\Tilings)^{(\beta)}$. 
Let $M$ be the set of patterns of size $2p\times 2p$ that do not admit $p$ as a period.
Let $x_i$ be an element that is isolated in $(\Tilings)^{(\beta_i)}$.

As there is only a finite number of $p$-periodic tilings, we may suppose
w.l.o.g. that no $x_i$ admit $p$ as a period.

For any $i$, there exists a pattern of $M$ that appears in $x_i$.
Let $x'_i$ be the tiling with this pattern at its center.
By compactness, one can extract a limit $x'$ of the sequence $(x'_i)_{i\in\N}$, 
$x$' is by construction in $\cap_i (\Tilings)^{(\beta_i)} = \Tilings^{(\beta)}$.
However, $x'$ does not contain a $p-$periodic pattern at its center, that is
a contradiction.
\end{proof}
We write $\alpha = \lambda + 2$ the rank of $\Tilings$.

We already proved that there exists a non minimal tiling but this is not
sufficient to conclude that there exists a tiling at level $1$\footnote{We
actually can prove that the level $1$ exists: There is no infinite decreasing
chain whose lower bound is a periodic configuration}.
However, we achieve this as a corollary of the previous lemma:
$(\Tilings)^{(\lambda)}$ is infinite (otherwise $(\Tilings)^{(\lambda+1)}$ would be empty)
and contains a non periodic tiling by theorem \ref{thm:peralorsperfini}.
This non periodic tiling $c$ is not minimal (otherwise it would be strictly
quasiperiodic and then $\Tilings$ would not be countable).
Now $c$ is at level $1$ : any tiling less than $c$ is in
$(\Tilings)^{(\lambda + 1)}$ therefore periodic (hence minimal).

If a tiling $x$ is of \typea and is ranked, then it has a vector of
periodicity:
consider the pattern $P$ that isolates it in the last topological derivative of
$\Tilings$ that it belongs to.
Since $x$ is of \typea , this pattern appears twice in it, therefore there
exists a shift $\sigma$ such that 
$\sigma (x)$ contains $P$ at its center. $x = \sigma (x)$ because $P$ isolates $x$.


As any tiling of \typea has a vector of periodicity, it remains to prove that
$c$ is of \typea:
\begin{lemma}
$c$ is of \typea.
\end{lemma}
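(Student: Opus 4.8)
The plan is a proof by contradiction: assume $c$ is of \typeb and deduce that $\Tilings$ is uncountable. Fix a pattern $P$ that occurs exactly once in $c$. Then $c$ has no nonzero period (a period would make $P$ recur), so the orbit $\{\sigma_v(c):v\in\Z^2\}$ is infinite; and each $\sigma_v(c)$ is isolated in $\Gamma(c)$, since any element of $\Gamma(c)$ carrying $P$ at the same position is a limit of shifts of $c$ and must, $P$ being unique, equal $\sigma_v(c)$.

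Put $Y=\Gamma(c)\setminus\{\sigma_v(c):v\in\Z^2\}$. As the orbit points are isolated, $Y$ is a closed shift-invariant subset of $\Tilings$, i.e. a subshift, and it is nonempty because an infinite discrete subset of the compact set $\Gamma(c)$ cannot be closed. Every $y\in Y$ satisfies $y\prec c$: indeed $y\preceq c$ because $y\in\Gamma(c)$, and $y\not\approx c$ because the only tilings equivalent to a \typeb tiling are its shifts. Since $c$ is at level $1$ and, the set of tilings being countable, all tilings are ranked, each such $y$ is periodic (remark~\ref{rq:minperio}). Hence $Y$ is a subshift consisting only of periodic tilings, so by Theorem~\ref{thm:peralorsperfini} it is finite; let $p$ be a common period of its finitely many elements.

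The core step is to show that $c$ coincides with a genuine $p$-periodic tiling outside a finite region. Since the orbit points are isolated, every accumulation point of a sequence $\sigma_{w_n}(c)$ with $|w_n|\to\infty$ belongs to $Y$, so $d(\sigma_w(c),Y)\to 0$ as $|w|\to\infty$; concretely, far from the origin $c$ agrees on arbitrarily large windows with one of the $p$-periodic members of $Y$. Consequently the set $\{u:c(u)\neq c(u+(p,0))\ \text{or}\ c(u)\neq c(u+(0,p))\}$ is finite, and a short connectivity argument — being equal along $(p,0)$ and along $(0,p)$ connects each coset of $p\Z^2$ minus finitely many edges, leaving one infinite component per coset — produces a $p$-periodic configuration $y_0$ with $c=y_0$ off a finite set $B$. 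Moreover $y_0$ is a limit of shifts of $c$, hence $y_0\in\Gamma(c)\subseteq\Tilings$ is an actual tiling, and $y_0\neq c$ since $c$ is not periodic, so $B\neq\emptyset$.

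Finally I use that $\Tilings$ is of finite type. Let $V$ be the constraint window and choose $K$ so large that the translates $B+(nKp,0)$, $n\in\Z$, are pairwise farther apart than $\mathrm{diam}(V)$. For each $S\subseteq\Z$ let $c_S$ be $y_0$ altered so as to carry a copy of the defect of $c$ on $B+(nKp,0)$ for every $n\in S$; each $V$-window of $c_S$ coincides either with the corresponding window of $y_0$ or with one of a translate of $c$, so $c_S\in\Tilings$, and distinct choices of $S$ give distinct tilings. Thus $|\Tilings|\geq 2^{\aleph_0}$, contradicting countability, and $c$ must be of \typea. The main obstacle is the third paragraph: squeezing out of the topological and rank information the rigid conclusion that $c$ is a periodic tiling with a finite defect; after that, cloning the defect to inflate the cardinality is routine.
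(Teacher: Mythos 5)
Your proof is correct and follows essentially the same strategy as the paper's: show that the failure of $p$-periodicity in $c$ is confined to a finite neighbourhood of the unique pattern, then clone that finite defect at infinitely many well-separated positions (using that the constraints have bounded window) to produce $2^{\aleph_0}$ valid tilings. You supply details the paper only sketches — isolating the orbit in $\Gamma(c)$, obtaining the finite periodic background $Y$ via Theorem~\ref{thm:peralorsperfini} applied to a subshift rather than via the finiteness of $\Tilings^{(\lambda+1)}$, and the connectivity argument producing $y_0$ — but these are refinements of the same idea, not a different route.
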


\begin{proof}
Suppose the converse : there exists a pattern $P$ that appears only once in $c$.
Considering the union of this pattern $P$ and a pattern that isolates
$c$, we may assume that $P$ isolates $c$.
$c$ has only a finite number of tilings smaller than itself:  they lie in
$\Tilings^{(\lambda +1)}$ which is finite, and are all periodic, say of
period $p$. As $P$ isolates $c$, none of these tilings contain $P$.

Consider the patterns of size $2p \times 2p$ of $T$ that are not $p-$periodic.
If those patterns can appear arbitrary far from $P$ then one can extract a
tiling from $c$ (thus smaller than $c$) that is not $p-$periodic and does not
contain $P$; this is not possible.

Therefore there is a pattern in $c$ that contains $P$ (thus appears only once) and
any other part of $c$ is $p-$periodic (one can gather all non $p-$periodic parts
of $c$ around $P$), as depicted in Fig.~\ref{c_type_b_alors_non_dnb:base}.

\begin{figure}[ht]
\begin{tabular}{c||c}
\subfigure[What we get : $c$ is periodic everywhere but at
$P$]{\label{c_type_b_alors_non_dnb:base}\includegraphics[width=.4\linewidth]{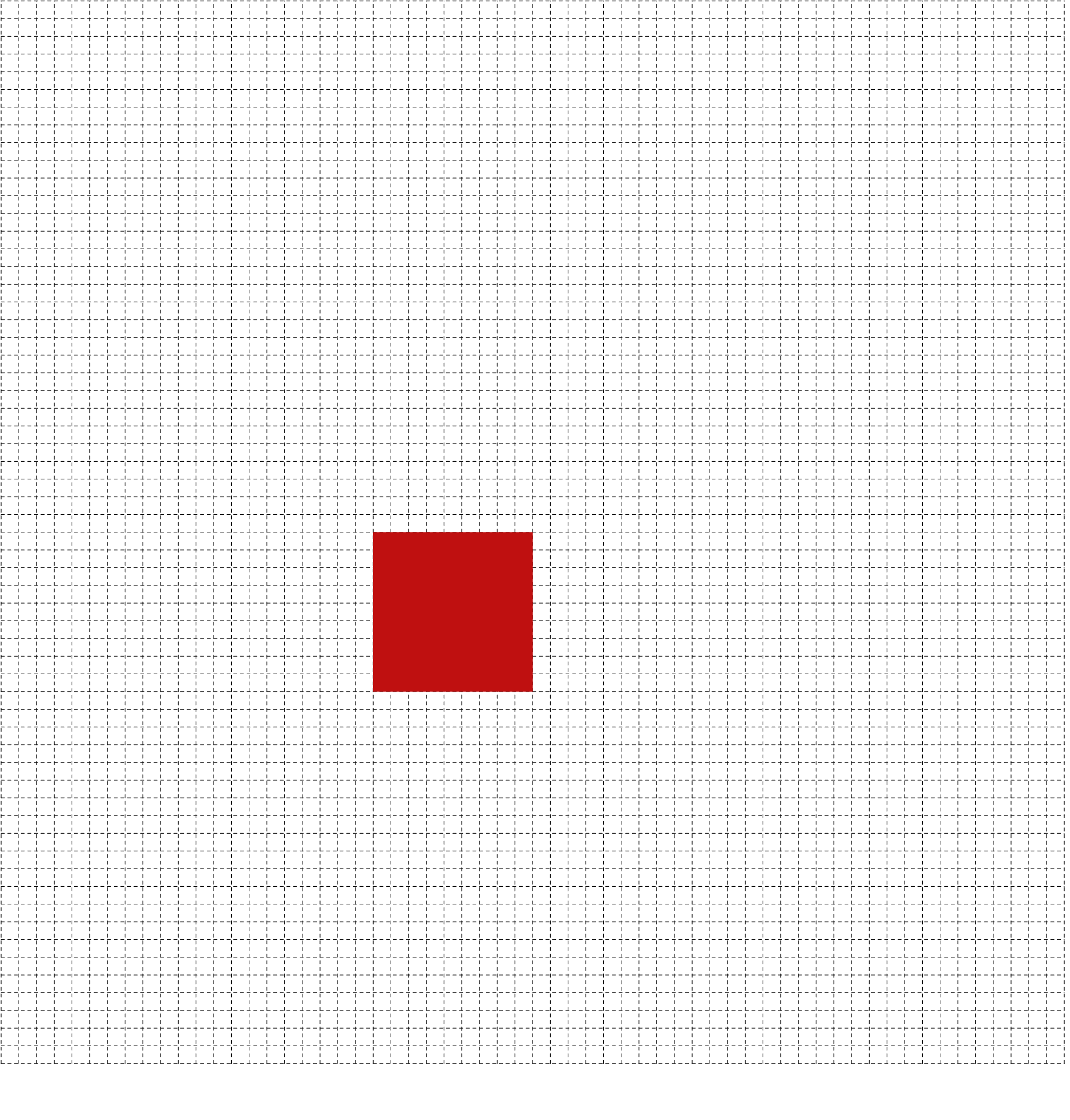}}
&
\subfigure[$P$ can appear at many different places since $c$ has periodic
patterns]{\label{c_type_b_alors_non_dnb:apres}\includegraphics[width=.4\linewidth]{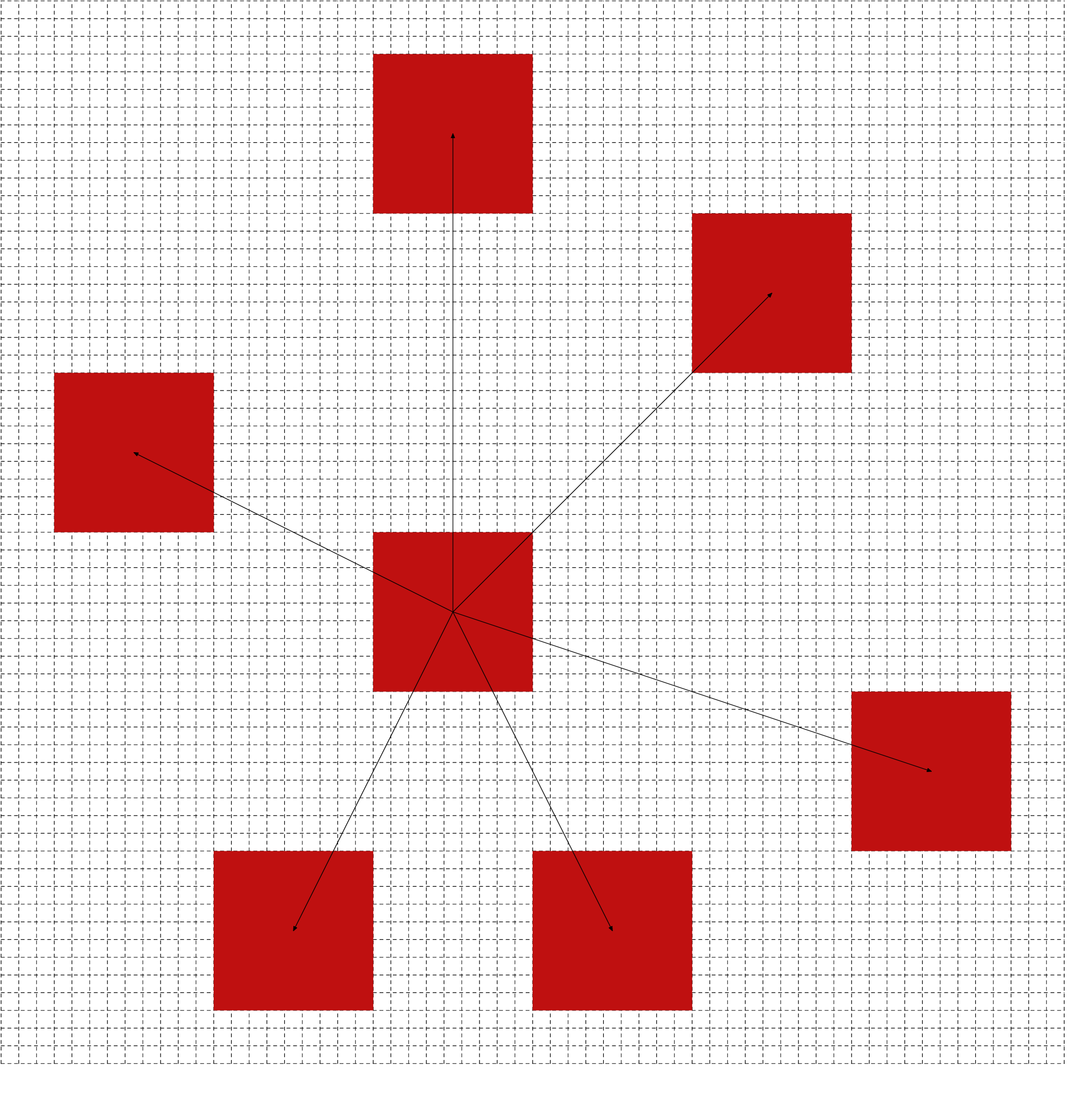}}
\end{tabular}
\caption{What can happen if $c$ is of \typeb ?}
\label{fig:c_type_b_alors_non_dnb}
\end{figure}

This non periodic part could also be inserted at infinitely many different
positions in $c$ since the tiling
rules are of bounded radius, as depicted in
Fig.~\ref{c_type_b_alors_non_dnb:apres}. Hence the number of tilings is not countable.
\end{proof}

$c$ is of \typea, $c$ is not periodic, $c$ has a vector of periodicity,
therefore our theorem \ref{thm:dnbalors1perstrict} holds :

\begin{theorem}
\label{thm:dnbalors1perstrict}
If $\tau$ is a tile-set that produces a countable number of tilings then it
produces a tiling with exactly one vector of periodicity.
\end{theorem}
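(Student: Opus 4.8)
The plan is to assemble Theorem~\ref{thm:dnbalors1perstrict} entirely from the pieces already laid out in the paper, since by the time we reach the statement all the genuine work has been done. Concretely, I would argue as follows: assume $\tau$ produces a countable set of tilings $\Tilings$. By Theorem~\ref{thm:countableequivallranked} every tiling is ranked, so $\Tilings$ has a well-defined Cantor-Bendixson rank $\alpha$; by the compactness argument given just before Lemma~\ref{cbranklambdaplusdeux}, $\alpha$ is a successor, and by Lemma~\ref{cbranklambdaplusdeux} it is not the successor of a limit ordinal, hence $\alpha=\lambda+2$ for some ordinal $\lambda$. Then $\Tilings^{(\lambda)}$ is infinite (otherwise $\Tilings^{(\lambda+1)}=\emptyset$, contradicting the rank being $\lambda+2$), so by Theorem~\ref{thm:peralorsperfini} applied to the subshift $\Tilings^{(\lambda)}$ it contains a non-periodic tiling $c$. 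This $c$ cannot be minimal, for a minimal non-periodic tiling is strictly quasiperiodic and by Remark~\ref{rq:minperio} (citing \cite{DBLP:journals/tcs/Durand99}) that forces $\Tilings$ uncountable. Every tiling strictly below $c$ lies in $\Tilings^{(\lambda+1)}$, which is finite, so $c$ is ``at level $1$''.

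Next I would invoke the \typea/\typeb dichotomy. The Lemma in the excerpt shows $c$ is of \typea: if some pattern appeared only once in $c$, we could (enlarging it by a pattern isolating $c$) assume it isolates $c$, note the finitely many periodic tilings below $c$ avoid it, conclude all non-$p$-periodic behaviour of $c$ is confined to a bounded region around this pattern, and then — because the tiling rules have bounded radius — reinsert that bounded non-periodic region at infinitely many positions to manufacture uncountably many tilings, a contradiction. So $c$ is of \typea.

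Finally, the paragraph preceding the last lemma shows a ranked \typea tiling has a vector of periodicity: if $P$ isolates $c$ in the last derivative $\Tilings^{(\mu)}$ containing it, then since $c$ is of \typea, $P$ recurs, so some shift $\sigma\neq 0$ satisfies $\sigma(c)\in\OP\cap\Tilings^{(\mu)}=\{c\}$, whence $\sigma(c)=c$ and $\sigma$ is a period vector of $c$. Thus $c$ has at least one vector of periodicity but is not periodic (it has no two independent period vectors, since $\Z^2$-periodicity would make $c$ a periodic tiling), i.e.\ $c$ has exactly one vector of periodicity. This is precisely the assertion of the theorem.

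There is essentially no obstacle left at this stage — the statement is a corollary of the three-step program announced in Section~\ref{sec:main}. If I were proving it from scratch the real difficulty would be Theorem~\ref{thm:peralorsperfini} (a converging sequence of periodic tilings with unbounded periods need not have a non-periodic limit, so one must carefully extract a limit agreeing with a periodic tiling on a full quadrant but differing at the origin) and Lemma~\ref{cbranklambdaplusdeux} (ruling out rank $\lambda+1$ for limit $\lambda$, again by a compactness-and-recentering argument on isolated points of the $\Tilings^{(\beta_i)}$). Given those, the only care needed here is bookkeeping: checking that Theorem~\ref{thm:peralorsperfini} applies to the subshift $\Tilings^{(\lambda)}$ rather than a tile-set (the remark after its proof grants exactly this), and keeping track that ``level $1$'' plus \typea plus ranked really does yield a single period vector rather than none or two.
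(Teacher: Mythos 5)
Your proposal is correct and follows the paper's own three-step argument essentially verbatim: rank $\lambda+2$ via Lemma~\ref{cbranklambdaplusdeux}, a non-periodic $c$ at level $1$ in $\Tilings^{(\lambda)}$ via Theorem~\ref{thm:peralorsperfini} (applied to a subshift, as the remark after that proof permits), then \typea plus rankedness yielding exactly one vector of periodicity. No gaps; the bookkeeping points you flag are exactly the ones the paper itself addresses.
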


\section{Open problems}

We are interested in proving more precise results for the order $\prec$
for a countable set of tilings: we wonder whether the order $\prec$ has at
most finitely many levels, as it is the case in Fig.~\ref{fig:treillis}. 
We know how to construct a tile-set so that the maximal level is any arbitrary
integer see \eg Fig.\ref{fig:levelgrandhasse} for level $3$.

We also intend to prove a similar result for uncountable sets of
tilings; the problem is that we are tempted to think that if the set
of tilings is uncountable, then a quasiperiodic tiling must appear.
However, this is not true: imagine a tile-set that admits a vertical 
line of white or black cells with red on the left and green on the 
right. The uncountable part is due to the vertical line that itself 
contains a quasiperiodic of dimension 1 but not of dimension 2. This tile-set
produces tilings that looks like $H$ in Fig.~\ref{fig:treillis}, except that the
vertical line can have two different colors without any constraint.

A generalization of lemma \ref{cbranklambdaplusdeux} would be to prove that the
Cantor-Bendixson rank of a countable set of tilings cannot be infinite; we know
how to construct sets of tilings that have an arbitrary large but finite Cantor-Bendixson
rank, but we do not know how to obtain a set of tilings of rank greater than $\omega$.

\bibliographystyle{plain}

\appendix


\begin{figure}[ht]
\centering

\includegraphics{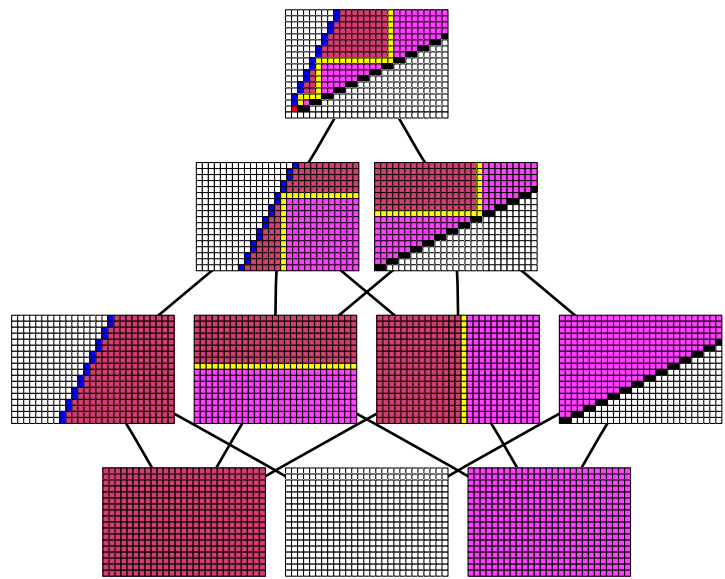}

\caption{An example of a tile-set that produces countably many tilings and a tiling
at level~3}
\label{fig:levelgrandhasse}
\end{figure}

\newpage
\null
\end{document}